\documentclass[conference]{IEEEtran}
\IEEEoverridecommandlockouts

\usepackage{cite}
\usepackage{amsmath,amssymb,amsfonts,amsthm}
\usepackage{graphicx}
\usepackage{textcomp}
\usepackage{xcolor}
\usepackage{multirow}
\usepackage{booktabs}
\usepackage{float}
\usepackage{placeins}
\usepackage{enumitem}
\usepackage{algorithm}
\usepackage{algpseudocode}
\usepackage{bbm}
\usepackage{mathrsfs}

\makeatletter
\newcommand\fs@ruledtopgap{%
  \def\@fs@cfont{\bfseries}%
  \let\@fs@capt\floatc@ruled
  \def\@fs@pre{\vskip 6pt\hrule height.8pt depth0pt \kern2pt}%
  \def\@fs@post{\kern2pt\hrule\relax}%
  \def\@fs@mid{\kern2pt\hrule\kern2pt}%
  \let\@fs@iftopcapt\iftrue
}
\makeatother

\floatstyle{ruledtopgap}
\restylefloat{algorithm}

\newtheorem{theorem}{Theorem}
\newtheorem{lemma}[theorem]{Lemma}
\newtheorem{proposition}[theorem]{Proposition}

\makeatletter
\g@addto@macro\normalsize{%
  \setlength{\abovedisplayskip}{4pt plus 1pt minus 1pt}%
  \setlength{\belowdisplayskip}{4pt plus 1pt minus 1pt}%
  \setlength{\abovedisplayshortskip}{2pt plus 1pt minus 1pt}%
  \setlength{\belowdisplayshortskip}{2pt plus 1pt minus 1pt}%
}
\makeatother

\def\BibTeX{{\rm B\kern-.05em{\sc i\kern-.025em b}\kern-.08em
    T\kern-.1667em\lower.7ex\hbox{E}\kern-.125emX}}

\makeatother

\begin{document}

\title{Reliability-Aware Scheduling for\\
Digital Twin Maintenance}

\author{
\IEEEauthorblockN{Milica~Jankov, Carlo~Fischione}
\IEEEauthorblockA{School of Electrical Engineering and Computer Science,
KTH Royal Institute of Technology, Stockholm, Sweden\\
Email: \{milicaj, carlofi\}@kth.se
}
\thanks{This project has received funding from the European Union's Horizon Europe programme under grant agreement No.~101137954. The authors would like to thank the BATTwin consortium for supporting this research.}
}

\maketitle

\begin{abstract}
In Industrial Internet of Things systems, learning-enabled Digital Twins
(DTs) support remote monitoring by using data reported by distributed devices
to maintain digital representations of physical processes. When uplink
resources are limited, a base station cannot collect new observations from
every device at every communication slot and must decide which devices should transmit. This
decision becomes challenging when the physical process changes after the DT
models have already been trained. In such cases, recent observations alone may not keep
the DT accurate, because the learned model may no longer match the underlying
process.
This paper studies how to schedule observation requests so that the DT
maintained at the base station remains close to the true physical process. We
define the Ensemble Disagreement Indicator (EDI) as an uncertainty measure computed at
the base station from the spread among estimates produced by independently
trained DT predictors. Building on EDI, we propose R-VoU, a reliability-aware
value-of-update scheduler that prioritizes the observations expected to most
improve the DT by reducing a cost based on uncertainty and predicted DT error. The DT is further adjusted online using the
difference between received observations and current predictions. Experiments on process manufacturing data show that, under limited
communication budgets, R-VoU achieves the lowest combined cost and DT
estimation error among the compared schedulers that use only information
available before each scheduling decision.
\end{abstract}

\begin{IEEEkeywords}
Digital twins, value of update, pull scheduling,
concept drift, online adaptation, Industrial Internet of Things
\end{IEEEkeywords}
\vspace{-1.70mm}
\section{Introduction}

Industrial Internet of Things (IIoT) systems increasingly rely on learning-enabled
Digital Twins (DTs) for remote monitoring, anomaly detection, and control. The setting
considered here involves learning-enabled receiver-side DTs maintained at a
base station (BS). Each DT uses intermittently received measurements to
estimate and track the state of an \mbox{evolving} physical process~\cite{tao19,pappas2021semantics}. In resource-constrained uplink networks,
the BS cannot receive fresh measurements from all devices continuously. We
therefore consider a pull-based update scheme, in which the BS requests
observations from selected devices under a per-slot communication budget~\cite{chiariotti2022qaoi, agheli2024eff}. The resulting scheduling question is which devices should be pulled so that
the BS-maintained DTs remain \textit{reliable}, i.e., accurate and synchronized with
the physical process \cite{zhang2024knowledge}.

This question becomes more difficult under \emph{concept drift}, which refers
to deployment-time changes in operating conditions or process regimes that make
models trained on historical data less aligned with the current physical
process~\cite{Lu_18conceptdrift}. Under concept drift, temporal freshness alone is not
enough to guarantee DT reliability. Even a recently updated DT may be inaccurate
if its predictor is mismatched to the current operating regime. This issue is
especially relevant in industrial process manufacturing, where limited communication
\mbox{opportunities} must support accurate state tracking under changing process
conditions~\cite{tao19}.

Existing freshness metrics and related scheduling methods provide useful baselines for reliability, but
they do not fully capture the failure mode induced by concept drift. Age of Information, Age of
Incorrect Information, Query Age of Information, and Value of
Information methods rank updates according to elapsed time, mismatch, query
relevance, or expected utility
\cite{sun2022age,maataouk2020aoii,chiariotti2022qaoi,Molin2019VoI}.
However, under concept drift, reducing receiver-side uncertainty and reducing
receiver-side DT error are related but not identical reliability objectives.
To clarify the reliability objective, we distinguish between receiver-side
epistemic uncertainty and receiver-side DT error. 

Epistemic uncertainty refers
to uncertainty about the learned DT model itself~\cite{waegeman21}. It arises
when the current prediction is not well supported by the available training
data, for example when concept drift moves the process to an operating regime
that was weakly represented during training. To capture this uncertainty, we define the \emph{Ensemble Disagreement Indicator} (EDI) from the disagreement among independently trained ensemble predictors, following the deep ensemble approach~\cite{nips2017}.We will show that EDI is observable at the BS because it is computed solely from receiver-side DT states and does not require the latent physical state.

Receiver-side DT error is defined separately as the mismatch between the DT
estimate and the true physical state, and thus directly quantifies tracking accuracy.
Because the true physical state monitored by the devices is not available to the scheduler at runtime,
this error is latent and must be predicted from BS-observable information.
Thus, EDI provides a BS-observable uncertainty signal, whereas DT error is
the latent reliability target.

The main contribution of this paper is the design of \mbox{R-VoU}, a novel
reliability-aware uplink pull scheduler for receiver-side DT maintenance under
concept drift. Under a one-step predictive formulation, \mbox{R-VoU} allocates pull opportunities to the devices expected to provide the largest positive reductions in DT
reliability cost. The second contribution formulates the
reliability model by defining EDI as a BS-observable uncertainty
signal from DT ensemble disagreement, distinguishing it from latent DT error,
and using pull residuals for online correction. The third contribution investigates the design and analysis of the learned scheduling rule. We train action-conditioned predictors for next-slot EDI and
DT error risk, derive a closed-form rule for selecting up to \(K\) devices,
establish a selection-loss bound, and validate the method on process
manufacturing data.
\vspace{-2.27mm}

\section{System Model and Pull-Scheduling Problem}
\label{sec:maintenance_model}

We consider a centralized time-slotted pull-maintenance system with one
base station (BS) and \(N\) devices. This system model follows pull-based
freshness and status-update models, where a receiver or controller requests
updates from selected devices under limited communication opportunities~\cite{chiariotti2022qaoi,kriole23,agheli2024eff}. The BS maintains one receiver-side DT
per device and can successfully receive fresh observations from at most
\(K\) devices per slot. In each slot, the BS schedules the devices to pull;
devices that are not scheduled do not transmit. We abstract
communication through successful receptions and focus on allocating the
limited pull opportunities so that the receiver-side DTs remain reliable
under concept drift.

Fig.~\ref{fig:system_model} summarizes the architecture. Pulled observations
refresh the selected DTs and reveal prediction residuals, defined as the
differences between received observations and current DT predictions. These
residuals update a shared online correction module that helps the
receiver-side DTs adapt under concept drift.
\vspace{-2mm}

\begin{figure}[!t]
    \centering
    \includegraphics[width=0.99
    \columnwidth]{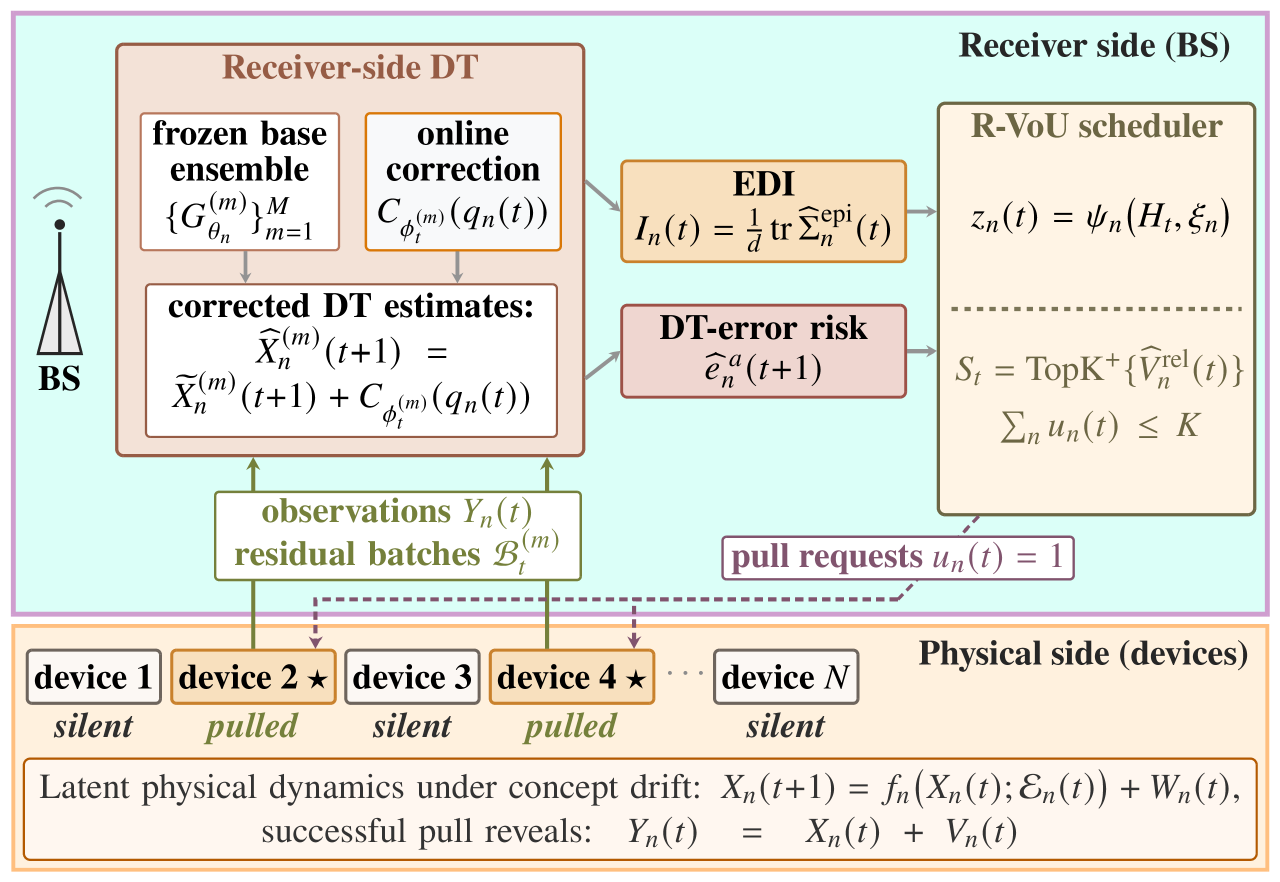}
    \vspace{-2mm}
    \caption{Pull-maintenance architecture. At each slot, the BS pulls up to \(K\) devices. The received observations refresh the selected receiver-side DTs and provide prediction residuals for online correction.}
    \vspace{-1.7mm}
    \label{fig:system_model}
\end{figure}

\vspace{-0.25mm}
\subsection{Receiver-Side DT with Online Correction}

We consider a time-slotted system indexed by
\(t\in\mathbb Z_{\ge 0}\). The device set is
\(\mathcal N=\{1,\ldots,N\}\). For each device \(n\in\mathcal N\),
let \(X_n(t)\in\mathbb R^d\) denote the latent physical state at
slot \(t\), where \(d\) is the
dimension of the per-device state. This state is not directly available at the BS. The BS
therefore maintains a receiver-side DT estimate and receives
information about \(X_n(t)\) only through noisy observations
obtained from successful pulls.

We model each device as evolving under a latent operating regime
\(\mathcal E_n(t)\). Changes in \(\mathcal E_n(t)\) correspond to the
concept drift setting introduced in Section~I. The state evolves according to
\begin{equation}
\label{eq:state_dynamics}
X_n(t+1)
=
f_n\!\big(X_n(t);\mathcal E_n(t)\big)+W_n(t),
\end{equation}
where \(f_n(\cdot;\mathcal E_n(t)):\mathbb R^d\to\mathbb R^d\) is the
state transition map of device \(n\) under regime \(\mathcal E_n(t)\). The map is not restricted to be
linear, and \(W_n(t)\in\mathbb R^d\) is zero-mean process noise.

When device \(n\) is successfully pulled at slot \(t\), the BS receives the
observation
\begin{equation}
\label{eq:observation_model}
Y_n(t)
=
X_n(t)+V_n(t),
\end{equation}
where \(V_n(t)\) is zero-mean measurement noise. We denote the successful
pull event by \(u_n(t)=1\). If \(u_n(t)=0\), then no fresh observation from
device \(n\) is received at slot \(t\).

For each device \(n\), the BS stores \(M\) pretrained base predictors
\(\{G_{\theta_n^{(m)}}:\mathbb R^d\to\mathbb R^d\}_{m=1}^{M}\), where
\(m\) indexes the ensemble member. The parameters
\(\theta_n^{(m)}\) are trained offline and kept fixed during deployment.
Since the ensemble members are trained independently, their disagreement
can be used as a BS-side measure of epistemic uncertainty, i.e., uncertainty
about the learned DT model~\cite{waegeman21}.

To adapt the frozen base predictors under concept drift, we introduce an online
correction module based on recursive least squares (RLS)~\cite{RLS}, which sequentially
estimates the linear correction parameters from received residual samples. Let
\(q_n(t)\in\mathbb R^p\) denote a BS-observable adaptation context for device
\(n\). For ensemble member \(m\), the correction is
\begin{equation}
\label{eq:rls_corrector}
C_{\phi_t^{(m)}}(q_n(t))
=
{W_t^{(m)}}^\top q_n(t),
\qquad
\phi_t^{(m)}=(W_t^{(m)},P_t^{(m)}),
\end{equation}
where \(W_t^{(m)}\in\mathbb R^{p\times d}\) is the RLS coefficient matrix
and \(P_t^{(m)}\) is the RLS covariance matrix. The coefficient matrix
\(W_t^{(m)}\) is shared across devices for the same ensemble member \(m\).
The correction remains device-dependent through the context \(q_n(t)\).

Let \(\tilde X_n^{(m)}(t+1)\) denote the base prediction before applying the
online correction, and let \(\hat X_n^{(m)}(t+1)\) denote the corrected
receiver-side DT estimate. The base prediction is updated as
\begin{equation}
\label{eq:base_dt_recursion_joint}
\tilde X_n^{(m)}(t+1)
=
\begin{cases}
G_{\theta_n^{(m)}}\!\big(Y_n(t)\big), & u_n(t)=1,\\
G_{\theta_n^{(m)}}\!\big(\hat X_n^{(m)}(t)\big), & u_n(t)=0.
\end{cases}
\end{equation}
Thus, the base prediction uses the newly received observation when
\(u_n(t)=1\) and the previous corrected DT estimate when \(u_n(t)=0\). The corrected estimate is then given by
\begin{equation}
\label{eq:corrected_dt_recursion_joint}
\hat X_n^{(m)}(t+1)
=
\tilde X_n^{(m)}(t+1)
+
C_{\phi_t^{(m)}}\!\big(q_n(t)\big).
\end{equation}
A pulled observation also provides information for online adaptation. Let
\(\tilde X_n^{(m)}(t)\) denote stored pre-correction base estimate for the current slot, obtained by the recursion in \eqref{eq:base_dt_recursion_joint} at the previous update. When device \(n\) is pulled, we define
the base-residual target
\begin{equation}
\label{eq:base_residual_supervision}
b_n^{(m)}(t)
\triangleq
Y_n(t)-\tilde X_n^{(m)}(t).
\end{equation}
This residual is the additive correction that would make the current base
prediction match the received observation.

For each ensemble member \(m\), the adaptation batch at slot \(t\) contains
one pair \((q_n(t),b_n^{(m)}(t))\) for every successfully pulled device:
\vspace{-2.3mm}
\begin{equation}
\label{eq:adaptation_batch}
\mathcal B_t^{(m)}
\triangleq
\big\{(q_n(t),b_n^{(m)}(t)):u_n(t)=1\big\}.
\end{equation}
The RLS state is updated according to
\vspace{-0.4mm}
\begin{equation}
\label{eq:adaptation_update}
\phi_{t+1}^{(m)}
=
\mathcal A_{\mathrm{RLS}}^{(m)}
\big(\phi_t^{(m)},\mathcal B_t^{(m)}\big),
\end{equation}
where \(\mathcal A_{\mathrm{RLS}}^{(m)}\)  denotes the standard RLS update
operator applied to the pulled batch \cite{RLS}.
If no device is pulled at slot \(t\), then
\(\mathcal B_t^{(m)}=\varnothing\) and \(\phi_{t+1}^{(m)}=\phi_t^{(m)}\).
Because the RLS coefficient matrix \(W_t^{(m)}\) is shared across devices
for each ensemble member \(m\), a base-residual target obtained from one
pulled device can update this shared coefficient matrix and thereby affect
future corrections for other devices with related contexts.

The ensemble-mean receiver-side DT estimate is
\vspace{-1mm}
\begin{equation}
\label{eq:ensemble_mean}
\bar X_n(t)
\triangleq
\frac{1}{M}\sum_{m=1}^{M}\hat X_n^{(m)}(t).
\end{equation}
We define the Ensemble Disagreement Indicator (EDI) as the average per-component sample variance of the corrected DT ensemble,
\vspace{-1.6mm}
\begin{equation}
\label{eq:aoint_new}
I_n(t)
\triangleq
\frac{1}{d(M-1)}
\sum_{m=1}^{M}
\left\|
\hat X_n^{(m)}(t)-\bar X_n(t)
\right\|_2^2 .
\end{equation}
Because \(I_n(t)\) depends only on receiver-side DT states, it is
observable at the BS and provides a causal uncertainty signal. Larger values
indicate stronger ensemble disagreement and hence greater epistemic
uncertainty.  Under model mismatch, EDI may increase during passive propagation
and decrease after informative receptions.

Tracking accuracy is described by a separate metric. We define the
receiver-side DT error as the mismatch between the ensemble-mean estimate
and the latent physical state,
\vspace{-1.2mm}
\begin{equation}
\label{eq:dt_error_system}
e_n(t)
\triangleq
\frac{1}{d}
\left\|
\bar X_n(t)-X_n(t)
\right\|_2 .
\end{equation}
This error depends on \(X_n(t)\), which is not available to the BS during
runtime. Hence, \(e_n(t)\) is not directly observable by the scheduler. In
offline experiments with recorded trajectories, the reference physical state
is used to compute \(e_n(t)\) for training labels and evaluation. Thus,
EDI provides a BS-observable uncertainty signal, whereas DT error is the
latent reliability target that must be inferred from the information available at the BS.
\vspace{-1.10mm}
\subsection{Reliability-Aware Pull-Scheduling Objective}
\vspace{-0.2mm}
At the beginning of each slot \(t\), the BS decides which devices to pull
before receiving any new observations. Let
\(
u(t) \triangleq (u_1(t),\ldots,u_N(t))\in\{0,1\}^N
\)
denote the slot-\(t\) pull-decision vector. Under the successful-reception
abstraction, \(u_n(t)=1\) means that device \(n\) is pulled and its
observation is received by the BS during slot \(t\). If \(u_n(t)=0\), no
fresh observation from device \(n\) is received in that slot.

The BS can receive observations from at most \(K\) devices per slot, where
\(K\in\{0,\ldots,N\}\). Hence, the feasible action set is
\vspace{-2mm}
\begin{equation}
\label{eq:feasible_action_set}
\mathcal U_K
\triangleq
\left\{
u\in\{0,1\}^{N}:
\sum_{n=1}^{N}u_n\le K
\right\}.
\end{equation}
We also define the set of pulled devices as
\vspace{-1.5mm}
\[
\mathcal S_t \triangleq \{n\in\mathcal N:u_n(t)=1\}.
\]
\noindent After the BS selects \(\mathcal S_t\), it receives
\(\{Y_n(t):n\in\mathcal S_t\}\). These observations refresh the selected
receiver-side DT states through \eqref{eq:base_dt_recursion_joint} and
\eqref{eq:corrected_dt_recursion_joint}. They also provide residual targets
for the RLS adaptation batches in \eqref{eq:adaptation_batch}. Devices that
are not pulled evolve through the skip branch of
\eqref{eq:base_dt_recursion_joint} and provide no new residual information
in that slot.

The scheduling decision is causal. Thus, \(u(t)\) must be chosen using only information available at the BS
before any observations are received during slot \(t\). Let
\(I(t)\triangleq(I_1(t),\ldots,I_N(t))\) collect the current EDI values across devices. Let
\(
\mathcal M_t \triangleq \{\phi_t^{(m)}\}_{m=1}^{M}
\)
denote the current RLS correction state, and let
\(\mathcal D_t^{\mathrm{res}}\) denote the residual information obtained
from previous pulls. The dynamic scheduler information is
\vspace{-1.2mm}
\begin{equation}
\label{eq:scheduler_information_reliability}
H_t
\triangleq
\Big(
\{I(\tau)\}_{\tau\le t},
\{u(\tau)\}_{\tau<t},
\mathcal D_t^{\mathrm{res}},
\mathcal M_t
\Big).
\end{equation}
\vspace{-3.9mm}

\noindent The receiver-side DT error vector
\(
e(t)\triangleq(e_1(t),\ldots,e_N(t))
\)
is not included in \(H_t\), because it depends on the latent physical states
and is not available to the runtime scheduler.

The scheduler is also provided with fixed device metadata. For each device
\(n\), let \(\xi_n\in\mathbb R^{r}\) denote metadata that is known before
scheduling and does not change over time, and let \(\omega_n>0\) denote the
reliability priority weight of the device. A causal pull-scheduling policy is
a sequence \(\pi=\{\pi_t\}_{t\ge0}\) such that
\[
\pi_t:
\big(
H_t,
\{\xi_n\}_{n=1}^{N},
\{\omega_n\}_{n=1}^{N}
\big)
\mapsto
u(t)\in\mathcal U_K .
\]
Thus, the policy uses the history of EDI values, past pull decisions, residual history,
the current RLS correction state, and fixed device metadata. It cannot use
the latent DT error vector \(e(t)\).

Since EDI and DT error have different scales, we normalize them using
training-set constants \(s_I>0\) and \(s_e>0\), fixed during evaluation. For a design parameter \(\alpha\in[0,1]\), the
per-device reliability cost is
\vspace{-4mm}

\begin{equation}
\label{eq:reliability_cost_system}
\mathcal J_n(t)
\triangleq
\omega_n
\left[
\alpha \frac{I_n(t)}{s_I}
+
(1-\alpha)\frac{e_n(t)}{s_e}
\right].
\end{equation}
The parameter \(\alpha\) balances EDI and DT error:
\(\alpha=1\) yields an EDI-only objective, while \(\alpha=0\) gives an objective based only on DT error.

We propose the following reliability-aware pull-scheduling problem:
\vspace{-4mm}
\begin{equation}
\label{eq:ideal_objective_reliability}
\min_{\pi}
\limsup_{T\to\infty}
\frac{1}{T}
\sum_{t=1}^{T}
\mathbb E_{\pi}
\left[
\sum_{n=1}^{N}\mathcal J_n(t)
\right],
\end{equation}
where the expectation is taken over the process noise, measurement noise,
latent regime evolution, and any randomization in the policy.

Problem~\eqref{eq:ideal_objective_reliability} provides an ideal objective,
but it does not directly yield a causal scheduling policy for the BS. Its direct
solution would require modeling the controlled evolution of \(I_n(t)\),
\(e_n(t)\), and \(\mathcal M_t\) under pull decisions, skipped updates,
online RLS adaptation, and concept drift. Moreover, \(e_n(t)\) is not
observable by the BS during deployment. We therefore cannot solve the ideal
problem directly. Instead, Section~\ref{sec:scheduling} investigates a learned
one-step prediction model that estimates, from BS-observable information, the
next-slot EDI and DT error risk under skip and pull actions. These estimates
are then used to compute the predicted reliability-cost reduction of each pull
decision.
\vspace{-1.02mm}
\section{Reliability-Aware Predictive Pull Scheduling}
\label{sec:scheduling}
\vspace{-0.2mm}
In this section, we introduce R-VoU, a reliability-aware value-of-update
scheduler based on learned one-step predictions. Offline, R-VoU learns
action-conditioned predictors for next-slot EDI and DT error risk from
BS-observable features, whereas online, the BS evaluates each device under
the skip and pull actions using the current context. The predictions are
combined into a reliability cost, and the BS selects up to \(K\) devices with
the largest positive predicted cost reductions.
\vspace{-0.9mm}
\subsection{Prediction Under Skip and Pull Actions}
\vspace{-0.50mm}
\noindent
For each device \(n\), let
\vspace{-1.30mm}
\begin{equation}
\label{eq:rvou_feature_vector}
z_n(t)=\psi_n(H_t,\xi_n)
\end{equation}
denote the scheduling feature vector available before the slot-\(t\)
pull decision. It is constructed from BS-observable information \(H_t\) and
the fixed device metadata \(\xi_n\). The latent DT error \(e_n(t)\) is not included.

To train the action-conditioned predictors, we use a local hypothetical action
index \(a\in\{0,1\}\), distinct from the realized scheduling variable
\(u_n(t)\). The index \(a\) defines the skip \((a=0)\) and pull \((a=1)\)
branches used to construct prediction targets, while \(u_n(t)\) denotes the
actual pull decision made by the BS at slot \(t\). For each training pair \((n,t)\), we define the next-slot prediction target as
\vspace{-0.1mm}
\begin{equation}
\label{eq:rvou_targets}
y_{n,t}^{(a)}
\triangleq
\begin{bmatrix}
y_{I,n,t}^{(a)}\\
y_{e,n,t}^{(a)}
\end{bmatrix}
=
\begin{bmatrix}
I_n^a(t+1)\\
e_n^a(t+1)
\end{bmatrix},
\qquad a\in\{0,1\}.
\end{equation}
The superscript \(a\) denotes the next-slot outcome obtained under the
corresponding hypothetical branch.  During training only, the recorded
trajectories are used to construct targets for both actions, skip and pull,
using the same DT recursion and, when needed, the same RLS update as at
runtime. In the pull branch, the received observation is used to form the
base-residual target \(Y_n(t)-\tilde X_n^{(m)}(t)\). In the skip branch, no new
residual target is formed for device \(n\).

At runtime, the scheduler does not observe the targets in
\eqref{eq:rvou_targets}. It only uses predictors trained from them.
\vspace{-1.9mm}
\subsection{Shared Linear Prediction Heads}
\vspace{-0.4mm}
\noindent
R-VoU uses shared prediction heads across devices. Let
\vspace{-0.8mm}
\[
\tilde z_n(t)=[1,z_n(t)^\top]^\top
\]
be the feature vector augmented with an intercept term. For each action \(a\in\{0,1\}\), the predicted EDI and DT error risk
at slot \(t+1\) are
\vspace{-1.95mm}
\begin{equation}
\label{eq:rvou_linear_heads}
\widehat y_n^a(t+1)
\triangleq
\begin{bmatrix}
\widehat I_n^a(t+1)\\
\widehat e_n^a(t+1)
\end{bmatrix}
=
B_a^\top \tilde z_n(t).
\end{equation}
Here, \(B_a\) is the coefficient matrix of the prediction head for action
\(a\). Its two columns are denoted by \(\beta_{I,a}\) and \(\beta_{e,a}\),
which map the feature vector \(\tilde z_n(t)\) to the EDI prediction and
the DT error risk prediction, respectively.

Let \(\mathcal T_{\mathrm{tr}}\) denote the set of device-slot pairs used for
training. For each action \(a\in\{0,1\}\), where \(a=0\) denotes skip and
\(a=1\) denotes pull, the corresponding prediction head is learned by
weighted multi-output ridge regression \cite{Ridge}. The output-weighting
matrix is \(\Gamma=\mathrm{diag}(1,\mu_e)\), where \(\mu_e\) controls the
relative weight of the DT error target:
\vspace{-0.7mm}
\begin{equation}
\label{eq:rvou_ridge}
\begin{aligned}
\widehat B_a \in \arg\min_{B_a}\quad
& \sum_{(n,t)\in\mathcal T_{\mathrm{tr}}}
\left\|
\Gamma^{1/2}
\big(
B_a^\top \tilde z_n(t)-y_{n,t}^{(a)}
\big)
\right\|_2^2  \\
\vspace{-0.8mm}
&\hspace{2.3cm} + \lambda_{\mathrm{reg}}\|B_a\|_F^2 .
\end{aligned}
\end{equation}
Here, \(\lambda_{\mathrm{reg}}\) controls the \(\ell_2\) regularization.
All features are standardized using training-set statistics. Since EDI and
DT error are nonnegative quantities, negative predictions are set to zero
before computing reliability costs. The priority weight \(\omega_n\) is not
used by the prediction heads and enters only through the reliability cost.
\vspace{-1.5mm}
\subsection{Reliability-Aware Value of Update}

The action-conditioned predictions are converted into a predicted
reliability cost using the same normalization as in
\eqref{eq:reliability_cost_system}. For device \(n\) and action
\(a\in\{0,1\}\),
\begin{equation}
\label{eq:predicted_reliability_cost}
\widehat{\mathcal J}_n^a(t+1)
\triangleq
\omega_n
\left[
\alpha\frac{\widehat I_n^a(t+1)}{s_I}
+
(1-\alpha)\frac{\widehat e_n^a(t+1)}{s_e}
\right].
\end{equation}
Here, \(a=0\) denotes skip and \(a=1\) pull, while
\(\alpha\in[0,1]\) balances predicted uncertainty and DT error risk.

The reliability-aware value of update is defined as the predicted
cost reduction obtained by pulling the device rather than skipping it:
\begin{equation}
\label{eq:rvou_score}
\widehat V_n^{\mathrm{rel}}(t)
\triangleq
\widehat{\mathcal J}_n^0(t+1)
-
\widehat{\mathcal J}_n^1(t+1).
\end{equation}
Thus, a positive value indicates that pulling device \(n\) is predicted
to reduce the next-slot reliability cost. The uncertainty-only predictive ablation is obtained by setting \(\alpha=1\);
we denote this ablation by EDI-VoU.
\vspace{-1.55mm}
\subsection{Top-\(K\)-Positive Rule and Selection-Loss Bound}
\vspace{-0.5mm}
Given the predicted costs, R-VoU solves the following one-step predictive
scheduling problem:
\vspace{-0.1mm}
\begin{equation}
\label{eq:rvou_proxy_problem}
\min_{u(t)\in\mathcal U_K}
\sum_{n=1}^{N}
\left[
(1-u_n(t))\widehat{\mathcal J}_n^0(t+1)
+
u_n(t)\widehat{\mathcal J}_n^1(t+1)
\right].
\end{equation}
For the ranking step, we define the reliability score of device \(n\) at slot
\(t\) as
\vspace{-2mm}
\[
s_n(t)\triangleq \widehat V_n^{\mathrm{rel}}(t).
\]
Thus, \(s_n(t)\) is the predicted reduction in reliability cost obtained by
pulling device \(n\) rather than skipping it. For any vector
\(x=(x_1,\ldots,x_N)\), we define \(\operatorname{TopK}_{+}(x)\) as the set of
indices corresponding to the largest strictly positive entries of \(x\), up to
the budget \(K\). Equivalently,
\vspace{-1mm}
\[
\operatorname{TopK}_{+}(x)
\in
\arg\max_{S\subseteq\mathcal N,\, |S|\le K}
\sum_{n\in S}x_n,
\]
with the convention that entries with \(x_n\le 0\) are not selected. Therefore, \(\operatorname{TopK}_{+}\) returns \(K\)
devices only when at least \(K\) entries of \(x\) are positive.
\vspace{-2.1mm}
\begin{lemma}[Top-\(K\) positive selection rule]
\label{lem:rvou_topk}
Let
\[
\widehat S_t=\operatorname{TopK}_{+}(s(t)),
\qquad
s(t)=(s_1(t),\ldots,s_N(t)).
\]
Define the binary decision vector \(u^\star(t)\) by
\vspace{-1mm}
\[
u_n^\star(t)=\mathbf 1\{n\in \widehat S_t\},
\qquad n=1,\ldots,N.
\]
Then \(u^\star(t)\) is an optimal solution of \eqref{eq:rvou_proxy_problem}. The pull set induced
by this decision is \(\widehat S_t\). Hence, the scheduler pulls up to \(K\)
devices and pulls exactly \(K\) devices only when at least \(K\) devices have
positive reliability scores.
\end{lemma}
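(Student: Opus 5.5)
The plan is to rewrite the objective of \eqref{eq:rvou_proxy_problem} as a constant minus a separable linear term, and then solve the resulting cardinality-constrained linear maximization by an exchange argument.

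\textbf{Step 1: separate the objective.} For any \(u\in\mathcal U_K\), expand each summand:
\[
(1-u_n)\widehat{\mathcal J}_n^0(t+1)+u_n\widehat{\mathcal J}_n^1(t+1)
=\widehat{\mathcal J}_n^0(t+1)-u_n s_n(t).
\]
Summing over \(n\), the objective equals \(\sum_{n}\widehat{\mathcal J}_n^0(t+1)-\sum_{n\in S}s_n(t)\), where \(S=\{n:u_n=1\}\). The first term does not depend on \(u\). The map \(u\mapsto S\) is a bijection between \(\mathcal U_K\) and the subsets of \(\mathcal N\) with \(|S|\le K\). Hence minimizing \eqref{eq:rvou_proxy_problem} is equivalent to
\[
\max_{S\subseteq\mathcal N,\ |S|\le K}\ \sum_{n\in S}s_n(t).
\]

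\textbf{Step 2: solve the reduced problem.} By definition, \(\operatorname{TopK}_{+}(s(t))\) is a maximizer of this problem. To make the statement self-contained, I would also verify directly that the concrete rule attains the maximum: sort the scores, and take the \(\min\{K,P\}\) largest ones, where \(P=|\{n:s_n(t)>0\}|\). The verification is an exchange argument. Let \(S\) be any feasible set.
\begin{itemize}
\item Removing any index \(n\) with \(s_n\le 0\) from \(S\) does not decrease the sum. So some optimum contains only positive entries and has size at most \(\min\{K,P\}\).
\item If such an optimum has fewer than \(\min\{K,P\}\) elements, adding an unused positive entry strictly increases the sum.
\item Swapping an element of the optimum for a larger unused positive entry does not decrease the sum.
\end{itemize}
Therefore the top \(\min\{K,P\}\) positive entries attain the maximum. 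Ties may yield several optimal sets, and any tie-break gives the same value, which is why \(\widehat S_t\) is stated as a member of the argmax.

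\textbf{Step 3: conclude.} Since \(u^\star(t)\) corresponds to \(\widehat S_t\), which is optimal for the reduced problem, Step 1 shows that \(u^\star(t)\) is optimal for \eqref{eq:rvou_proxy_problem}. Its pull set is \(\widehat S_t\) by construction. Because \(|\widehat S_t|=\min\{K,P\}\), the scheduler pulls exactly \(K\) devices if and only if \(P\ge K\).

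The only delicate point is the convention for entries with \(s_n\le 0\). Zero scores could be included in some optimal sets without changing the objective value. The claim remains valid because the convention fixes one particular optimal set rather than asserting that the optimum is unique. I would state this explicitly.
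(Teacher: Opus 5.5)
Your proposal is correct and follows the same route as the paper's proof sketch: you rewrite each summand as $\widehat{\mathcal J}_n^0(t+1)-u_n s_n(t)$, drop the $u$-independent term, and reduce the problem to maximizing $\sum_n u_n s_n(t)$ subject to $\sum_n u_n\le K$. Your exchange argument in Step 2 spells out what the paper leaves implicit, namely that the $\min\{K,P\}$ largest positive scores attain this maximum, and it also gives the slightly sharper conclusion that exactly $K$ devices are pulled if and only if $P\ge K$.
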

\vspace{-1mm}
\noindent\emph{Proof sketch.}
Substituting \eqref{eq:rvou_score} into \eqref{eq:rvou_proxy_problem}
removes terms independent of \(u(t)\), so the problem becomes maximizing
\(\sum_{n=1}^{N}u_n(t)s_n(t)\) subject to \(\sum_n u_n(t)\le K\).
Thus, the rule sets \(u_n(t)=1\) for the devices with the largest positive
scores, up to the budget \(K\), and sets the remaining entries to zero.
\(\square\)

Because \(s_n(t)\) is computed from the predicted reduction in the composite
reliability cost, it generalizes uncertainty-only scheduling. Two devices with
similar predicted EDI reduction can be ranked differently if their predicted
DT error risk reductions differ.

We next give a current-slot guarantee that connects selection quality to
reliability-score prediction accuracy. Let \(\mathcal V_n^\star(t)\) be a reference reliability value for device
\(n\), and define
\(\mathcal V^\star(t)\triangleq(\mathcal V_1^\star(t),\ldots,
\mathcal V_N^\star(t))\). Using
\(\operatorname{TopK}_{+}\), we define two selected sets: \(S_t^\star\) is
the set selected by the reference reliability values, and \(\widehat S_t\)
is the set selected by \mbox{R-VoU} using the learned reliability scores:
\vspace{-1.6mm}
\[
S_t^\star
=
\operatorname{TopK}_{+}(\mathcal V^\star(t)),
\qquad
\widehat S_t
=
\operatorname{TopK}_{+}(s(t)).
\]
\noindent The selection loss incurred by using the learned scores is
\begin{equation}
\label{eq:rvou_selection_loss}
\mathcal L_t^{\mathrm{sel},\star}
\triangleq
\sum_{n\in S_t^\star}\mathcal V_n^\star(t)
-
\sum_{n\in\widehat S_t}\mathcal V_n^\star(t).
\end{equation}
\vspace{-4.9mm}
\begin{proposition}[Prediction-error selection-loss bound]
\label{prop:rvou_selection_loss}
Suppose that the learned reliability scores satisfy
\vspace{-1mm}
\[
\left|
s_n(t)-\mathcal V_n^\star(t)
\right|
\le
\varepsilon_n(t),
\qquad n=1,\ldots,N .
\]
Then
\begin{equation}
\label{eq:rvou_selection_loss_bound}
0
\le
\mathcal L_t^{\mathrm{sel},\star}
\le
\sum_{n\in S_t^\star}\varepsilon_n(t)
+
\sum_{n\in\widehat S_t}\varepsilon_n(t)
\le
2K\varepsilon_{\max}(t),
\end{equation}
where \(\varepsilon_{\max}(t)=\max_n\varepsilon_n(t)\).
\end{proposition}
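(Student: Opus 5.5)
The plan is a sandwich argument. Each of $S_t^\star$ and $\widehat S_t$ is optimal for its own score vector over the same feasible family $\{S\subseteq\mathcal N: |S|\le K\}$, and we move between the two score vectors using the pointwise error bounds $\varepsilon_n(t)$.

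The first step is to make the optimality property of $\operatorname{TopK}_{+}$ precise. By the equivalent characterization given just before Lemma~\ref{lem:rvou_topk}, for any $x$,
\[
\sum_{n\in\operatorname{TopK}_{+}(x)}x_n=\max_{|S|\le K}\sum_{n\in S}x_n .
\]
I will check that the convention of discarding entries with $x_n\le 0$ does not break this. Dropping a nonpositive entry never decreases the sum. Keeping the largest positive entries, up to $K$ of them, is optimal by a swap argument. This is essentially the content of Lemma~\ref{lem:rvou_topk}, so I would cite it here.

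The second step is the lower bound. $\widehat S_t$ satisfies $|\widehat S_t|\le K$, so it is feasible for the maximization defining $S_t^\star$. Applying the first step with $x=\mathcal V^\star(t)$ gives
\[
\sum_{n\in S_t^\star}\mathcal V_n^\star(t)\ge\sum_{n\in\widehat S_t}\mathcal V_n^\star(t),
\]
that is, $\mathcal L_t^{\mathrm{sel},\star}\ge 0$.

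The third step is the upper bound, obtained from the chain
\[
\sum_{n\in S_t^\star}\mathcal V_n^\star(t)
\le\sum_{n\in S_t^\star}\big(s_n(t)+\varepsilon_n(t)\big)
\le\sum_{n\in \widehat S_t}s_n(t)+\sum_{n\in S_t^\star}\varepsilon_n(t)
\le\sum_{n\in \widehat S_t}\big(\mathcal V_n^\star(t)+\varepsilon_n(t)\big)+\sum_{n\in S_t^\star}\varepsilon_n(t).
\]
The first and last inequalities use $|s_n(t)-\mathcal V_n^\star(t)|\le\varepsilon_n(t)$, once in each direction. The middle inequality uses the first step with $x=s(t)$ and the feasibility of $S_t^\star$. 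Subtracting $\sum_{n\in\widehat S_t}\mathcal V_n^\star(t)$ from both sides gives the middle bound in \eqref{eq:rvou_selection_loss_bound}. The final inequality follows because each set has at most $K$ elements and $\varepsilon_n(t)\le\varepsilon_{\max}(t)$.

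The only delicate point is the first step. I need the $\operatorname{TopK}_{+}$ convention (excluding zero and negative scores, with ties broken arbitrarily) to still produce a maximizer over all subsets of size at most $K$. This holds because excluded entries contribute at most zero. Once that is settled, the rest is the two-line chain above.
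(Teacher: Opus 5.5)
Your proposal is correct and follows the paper's proof sketch: nonnegativity comes from the optimality of $S_t^\star$ for $\mathcal V^\star(t)$ over $\{S:|S|\le K\}$. The upper bound comes from adding and subtracting the learned scores, using $\sum_{n\in\widehat S_t}s_n(t)\ge\sum_{n\in S_t^\star}s_n(t)$, and then $|S_t^\star|,|\widehat S_t|\le K$; your check that the $\operatorname{TopK}_{+}$ convention still gives a maximizer over all subsets of size at most $K$ is a detail the sketch leaves implicit, since the paper builds it into the definition of $\operatorname{TopK}_{+}$.
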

\vspace{-2mm}
\noindent\emph{Proof sketch.}
The nonnegativity follows because \(S_t^\star\) maximizes the reference
score. Adding and subtracting the learned scores in
\eqref{eq:rvou_selection_loss}, and using
\(\sum_{n\in\widehat S_t}s_n(t)\ge
\sum_{n\in S_t^\star}s_n(t)\), which follows from
\(\widehat S_t=\operatorname{TopK}_{+}(s(t))\), leaves only the prediction
errors on \(S_t^\star\) and \(\widehat S_t\). Bounding these errors by
\(\varepsilon_n(t)\) gives the first inequality in
\eqref{eq:rvou_selection_loss_bound}; the last inequality follows from
\(|S_t^\star|,|\widehat S_t|\le K\). \(\square\)

Proposition~\ref{prop:rvou_selection_loss} shows that the quality of the
current-slot selection depends on how accurately the learned reliability scores
approximate the reference values: if these score errors are small, then the
loss relative to the reference selection is also small and is bounded by
\(2K\varepsilon_{\max}(t)\).
A useful choice for the reference value is the true one-step reliability value
\vspace{-0.1mm}
\begin{equation}
\label{eq:true_reference_value}
\mathcal V_n^\star(t)
=
\mathcal J_n^{0,\star}(t+1)
-
\mathcal J_n^{1,\star}(t+1),
\end{equation}
where \(\mathcal J_n^{a,\star}(t+1)\) is the reliability cost under the
reference next-slot outcome for action \(a\).
\vspace{-1.80mm}
\subsection{Runtime Policy}
\vspace{-1mm}
At runtime, the prediction heads are fixed, while the RLS correction state
\(\mathcal M_t=\{\phi_t^{(m)}\}_{m=1}^{M}\) continues to evolve from the
observations received through successful pulls.
Algorithm~\ref{alg:rvou_runtime} summarizes the causal R-VoU policy.

\begin{algorithm}[t]
\caption{Runtime R-VoU at Slot \(t\)}
\label{alg:rvou_runtime}
\small
\begin{algorithmic}[1]

\Require \(H_t,\mathcal M_t,K,\alpha,s_I,s_e,
\{\omega_n,\xi_n\}_{n=1}^{N}\), learned heads \(\widehat B\)
\For{\(n=1,\ldots,N\)}
    \State \(z_n(t)\gets\psi_n(H_t,\xi_n)\)
    \State Compute \(\widehat{\mathcal J}_n^0(t+1)\) and
    \(\widehat{\mathcal J}_n^1(t+1)\) using
    \eqref{eq:rvou_linear_heads}--\eqref{eq:predicted_reliability_cost}
    \State \(s_n(t)
    \gets
    \widehat{\mathcal J}_n^0(t+1)
    -
    \widehat{\mathcal J}_n^1(t+1)\)
\EndFor
\State \(S_t\gets\operatorname{TopK}_{+}(s(t))\)
\State Pull \(\{Y_n(t):n\in S_t\}\) and set
\(u_n(t)=\mathbf 1\{n\in S_t\}\)
\State For all \(m\), form
\(\mathcal B_t^{(m)}
\gets\{(q_n(t),Y_n(t)-\tilde X_n^{(m)}(t)):n\in S_t\}\)
\State Update all DT states using
\eqref{eq:base_dt_recursion_joint}--\eqref{eq:corrected_dt_recursion_joint}
with \(u(t)\) and \(\mathcal M_t\)
\State For all \(m\), update
\(\phi_{t+1}^{(m)}
\gets \mathcal A_{\mathrm{RLS}}^{(m)}
(\phi_t^{(m)},\mathcal B_t^{(m)})\)
\State \(\mathcal M_{t+1}\gets\{\phi_{t+1}^{(m)}\}_{m=1}^{M}\)
\end{algorithmic}
\end{algorithm}
\vspace{-0.3mm}
For fixed feature dimension, score computation is \(O(N)\), and selecting
\(S_t\) by sorting costs \(O(N\log N)\). The correction state
\(\mathcal M_t\) is used during slot \(t\), while the updated state
\(\mathcal M_{t+1}\) is used from the next slot onward.
\vspace{-2.45mm}
\section{Numerical Evaluation}
\label{sec:numerical}
\vspace{-0.7mm}
We evaluate R-VoU using recorded measurements from a battery production
process, which serves as a representative example of IIoT process monitoring.
In such settings, distributed sensing tracks the evolution of a physical
process, while communication constraints limit how many observations can be
collected at each slot. Each sampled location is represented by a scalar state,
so \(d=1\). The monitored spatial profile is modeled by \(N=10\) uniformly
sampled positions, where \(\xi_n\) is the normalized distance of position
\(n\) to the nearest profile boundary. Boundary positions receive larger
priority weights. A stable segment \(X_1\) is used to pre-train the
device-level ensemble predictors, while a later segment \(X_2\) is used for
evaluation under concept drift. Unless otherwise stated, each device uses
\(M=5\) base predictors and the shared online correction module.

We compare R-VoU with weighted Age of Information (wAoI), EDI-VoU,
and round-robin (RR) under budgets \(K\in\{1,2,3,4\}\). EDI-VoU is the
uncertainty-only ablation obtained by setting \(\alpha=1\), while R-VoU uses
\(\alpha=0.3\). To enable a comparison with Age of Incorrect Information (AoII), which
combines information mismatch with the time spent in an incorrect state~\cite{maataouk2020aoii}, we
include AoII\(^{\dagger}\) as a noncausal reference. In our DT setting, such
mismatch is the instantaneous error between the receiver-side DT estimate and
the recorded reference state. AoII\(^{\dagger}\) prioritizes devices using
this mismatch together with the time elapsed since the last successful pull.
Since the recorded reference trajectory is not available to the BS before
scheduling during deployment, AoII\(^{\dagger}\) is not a causal runtime
scheduler.

The predictive heads are trained on a prefix of \(40\%\) of \(X_2\) and evaluated on a
disjoint suffix. The feature vector $z_n(t)$ contains only pre-decision BS-observable features: EDI, edge-position weights, past residual summaries, covariance uncertainty, and local coupling terms. We report measured EDI \(J_I\), receiver-side DT error
\(J_e\), and the composite reliability cost \(J_{\mathcal J}\).

\begin{figure}[t]
    \centering
    \includegraphics[width=0.99\columnwidth]{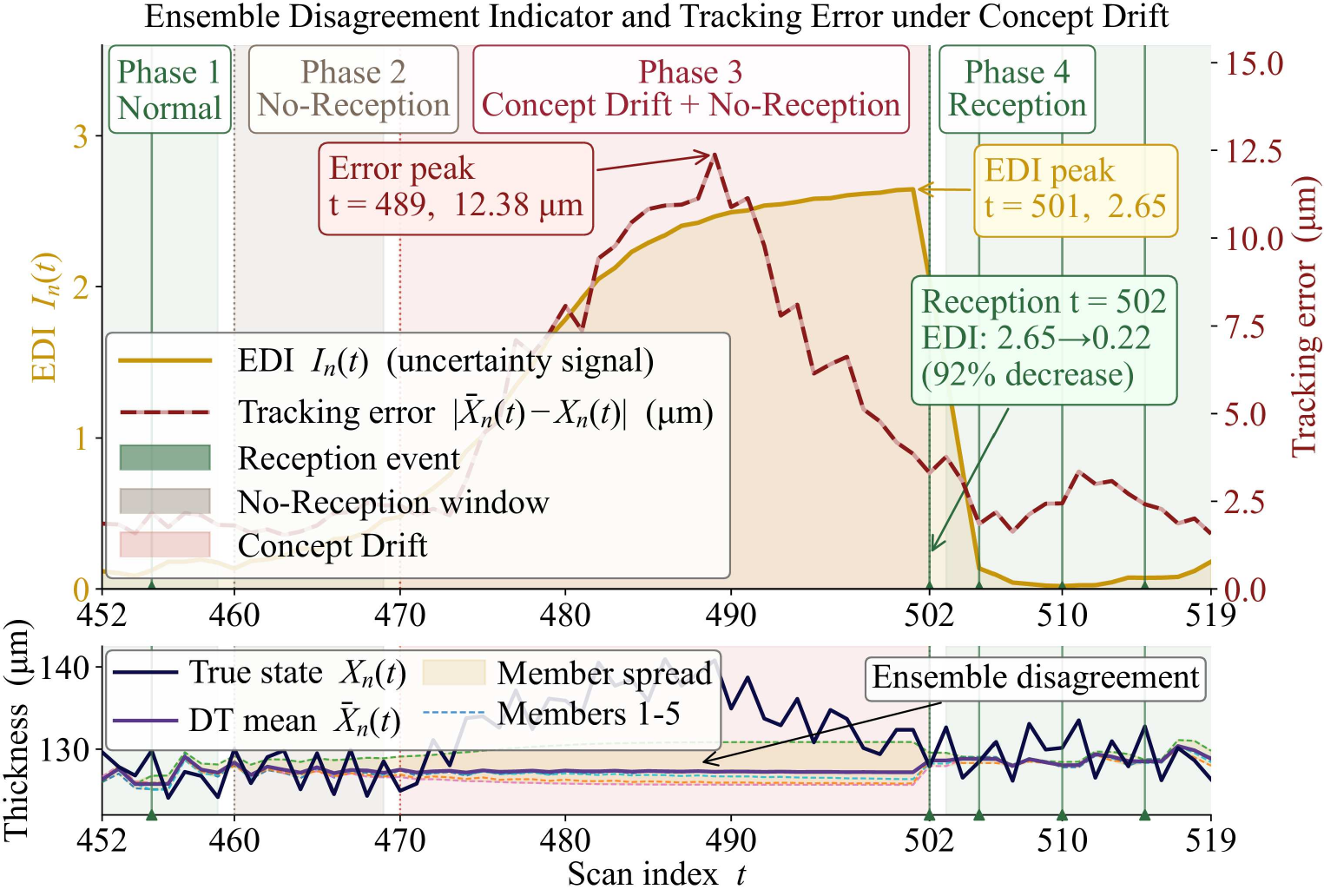}
    \caption{Representative position under concept drift. During a passive
    interval, EDI increases with DT error; after a successful reception,
    EDI drops sharply.}
    \label{fig:aoint_mechanism}
\end{figure}

\vspace{-1.9mm}
\subsection{EDI as a DT-Reliability Signal under Concept Drift}

We first evaluate whether EDI is associated with receiver-side DT error
using passive slots, i.e., slots without a successful pull for the considered
position. Pooling \(2805\) passive slots gives a positive Spearman correlation
\(\rho=0.3265\). The pooled mean DT error increases
from \(0.710\,\mu\mathrm{m}\) in the lowest EDI quintile to
\(3.540\,\mu\mathrm{m}\) in the highest, corresponding to a \(4.99\times\)
increase. Fig.~\ref{fig:aoint_mechanism} further illustrates that EDI rises
during a no-reception drift interval and drops after a successful update. These
results motivate combining EDI with predicted DT error in R-VoU.
\vspace{-1.6mm}
\subsection{Reliability-Aware Pull Scheduling}
\vspace{-0.3mm}
Table~\ref{tab:exp2_rollout} reports the scheduling results. Among the
causal schedulers, R-VoU achieves the lowest receiver-side DT error \(J_e\)
and the lowest composite reliability cost \(J_{\mathcal J}\) across all
evaluated budgets. Its gains over the strongest causal baseline reach up to
\(5.5\%\) in \(J_e\) and \(2.1\%\) in \(J_{\mathcal J}\). The policy that
minimizes measured EDI \(J_I\) varies across budgets, which further shows
that EDI alone is not equivalent to receiver-side DT reliability.

The noncausal AoII\(^{\dagger}\) benchmark achieves the lowest \(J_e\) by using recorded DT error computed before scheduling, which is unavailable to the BS at runtime. In contrast, R-VoU uses only BS-observable
information, remains close to AoII\(^{\dagger}\) at larger budgets, and
achieves the lowest \(J_{\mathcal J}\) at \(K=4\).

The online correction module further improves receiver-side DT reliability.
At \(K=3\), enabling RLS correction reduces \(J_e\) and \(J_{\mathcal J}\) by \(55.9\%\)  and \(28.4\%\), respectively, relative to the same R-VoU policy without correction.
\vspace{-0.2mm}
\begin{table}[!t]
\caption{Scheduling performance for \(N=10\) across budgets
\(K\in\{1,2,3,4\}\).}
\label{tab:exp2_rollout}
\vspace{-0.6mm}
\centering
\tiny
\renewcommand{\arraystretch}{0.99}
\setlength{\tabcolsep}{1.20pt}
\setlength{\arrayrulewidth}{0.6pt}
\resizebox{0.99\columnwidth}{!}{%
\begin{tabular}{@{}l|rrrrr|rrrrr@{}}
\hline
\multicolumn{1}{c|}{} 
& \multicolumn{5}{c|}{$K=1$}
& \multicolumn{5}{c}{$K=2$} \\
\cline{2-11}
Metric
& wAoI & RR & EDI-VoU & R-VoU & AoII\(^{\dagger}\)
& wAoI & RR & EDI-VoU & R-VoU & AoII\(^{\dagger}\) \\
\hline
$J_I$ &
\textbf{2.64} & 4.10 & 3.44 & 3.72 & 3.48 &
\textbf{3.12} & 3.85 & 5.93 & 4.00 & 4.11 \\
$J_e$ &
82.93 & 100.88 & 85.62 & \textbf{78.38} & 76.25 &
75.89 & 84.25 & 117.78 & \textbf{73.62} & 67.67 \\
$J_{\mathcal J}$ &
101.43 & 128.15 & 108.68 & \textbf{100.70} & 97.69 &
94.86 & 88.37 & 125.36 & \textbf{86.52} & 74.35 \\
\hline
\multicolumn{1}{c|}{} 
& \multicolumn{5}{c|}{$K=3$}
& \multicolumn{5}{c}{$K=4$} \\
\cline{2-11}
Metric
& wAoI & RR & EDI-VoU & R-VoU & AoII\(^{\dagger}\)
& wAoI & RR & EDI-VoU & R-VoU & AoII\(^{\dagger}\) \\
\hline
$J_I$ &
3.57 & 8.28 & 10.17 & \textbf{2.75} & 4.20 &
4.36 & 6.50 & \textbf{2.05} & 3.06 & 3.29 \\
$J_e$ &
69.13 & 84.81 & 99.19 & \textbf{66.14} & 64.98 &
66.18 & 76.96 & 65.67 & \textbf{64.13} & 63.74 \\
$J_{\mathcal J}$ &
110.28 & 152.60 & 180.42 & \textbf{108.69} & 108.48 &
109.22 & 133.83 & 103.87 & \textbf{101.89} & 102.26 \\
\hline
\end{tabular}%
}
\vspace{0.2mm}

\noindent
\begin{minipage}{0.99\columnwidth}
\raggedright\footnotesize
\hspace*{-0.4mm}\(^{\dagger}\)
Noncausal benchmark using the recorded source--receiver mismatch.
\end{minipage}
\vspace{-0.5mm}
\end{table}
\vspace{-0.7mm}
\section{Conclusion}

This paper studied reliability-aware pull scheduling for receiver-side DT
maintenance under limited uplink resources and concept drift. We defined EDI
as a BS-observable uncertainty signal from ensemble disagreement and
distinguished it from receiver-side DT error, which is latent at runtime. This
distinction motivated R-VoU, a value-of-update scheduler that predicts
next-slot EDI and DT error risk and pulls devices according to the predicted
reduction in composite reliability cost. Experiments on recorded process
manufacturing data showed that EDI is positively associated with DT error,
but is not sufficient as a standalone reliability objective. Across the
evaluated budgets, R-VoU achieved the lowest composite reliability cost and
receiver-side DT error among the causal schedulers. The online correction
module further improved reliability through residual-based adaptation. Future
work will study richer DT error risk predictors and longer-horizon scheduling
policies.
\vspace{-2.950mm}
\bibliographystyle{IEEEtran}
\bibliography{references}

\end{document}